\newcommandx{\unsure}[2][1=]{\todo[linecolor=red,backgroundcolor=red!25,bordercolor=red,#1]{#2}}
\newcommandx{\change}[2][1=]{\todo[linecolor=blue,backgroundcolor=blue!25,bordercolor=blue,#1]{#2}}
\newcommandx{\info}[2][1=]{\todo[linecolor=OliveGreen,backgroundcolor=OliveGreen!25,bordercolor=OliveGreen,#1]{#2}}
\newcommandx{\improvement}[2][1=]{\todo[linecolor=Plum,backgroundcolor=Plum!25,bordercolor=Plum,#1]{#2}}
\newcommandx{\thiswillnotshow}[2][1=]{\todo[disable,#1]{#2}}
\newcommand\myeq{\stackrel{\mathclap{\normalfont\mbox{def}}}{=}}
\theoremstyle{plain}
\newtheorem{thm}{Theorem}
\numberwithin{thm}{subsection}
\newtheorem{lem}[thm]{Lemma}
\theoremstyle{definition}
\newtheorem{defn}[thm]{Definition}
\newtheorem{example}[thm]{Example}
\theoremstyle{remark}
\newtheorem{rem}[thm]{Remark}
\title{Thiele's Differential Equation Based on Markov Jump Processes with Non-countable State Space}
\author{
Emmanuel Coffie\footnote{Department of Mathematics and Statistics, University of Strathclyde, Glasgow, G1 1XH, UK. Email: emmanuel.coffie@strath.ac.uk.} \\
  \and
Sindre Duedahl\footnote{Danske Bank, N-0250, Aker Brygge, Oslo, Norway.  Email: sidu@danskebank.com.}\\
  \and
Frank Proske\footnote{Department of Mathematics, University of Oslo, N-0316, Blindern, Oslo, Norway. Email: proske@math.uio.no.}\\
}
\begin{document}
\maketitle
\begin{abstract}
In modern life insurance, Markov processes in continuous time on a finite or at least countable state space have been over the years an important tool for the modelling of the states of an insured. Motivated by applications in disability insurance, we propose in this paper a model for insurance states based on Markov jump processes with more general state spaces. We use this model to derive a new type of Thiele's differential equation which e.g. allows for a consistent calculation of reserves in disability insurance based on two-parameter continuous time rehabilitation rates.

\medskip \noindent
{\small\bf Key words}: Life insurance, Thiele's differential equation, Markov processes with general state spaces, rehabilitation rates, insurance reserves.
\end{abstract}

\section{Introduction}
Over the years, finite-state Markov chains have played a significant role in multi-state modelling of life insurance risks. Typical insurance applications of finite-state Markov chains pertain to e.g. endowment insurance, life annuities or pension contracts. In fact, there is a rich literature devoted to finite- or countable-state Markov chain modelling of life insurance risks with respect to calculation of reserves. For instance, Henriksen et al. in \cite{Henriksen} employ the finite-state Markov chain framework to model (in addition to insurance risk) behavioural risk in the sense of e.g. surrender or free policy risk and examine the effects of such types of risks on prospective reserves. 
\par
On the other hand, Norberg \cite{Norberg} considers the case of a force of interest modelled by a time-continuous homogeneous Markov process with finite-state space and applies this model to the computation of prospective reserves for some standard insurance policies. As for a variety of other important applications of finite-state (or countable-state) Markov processes or chains to issues as e.g. unit-linked insurance policies in life insurance, we refer the reader to \cite{Koller} or \cite{moller2007market} and the references therein. 
\par
Despite of the wide applicability of Markov processes with countable state spaces in insurance risk modelling for reserve calculations, such processes may not be sufficient to consistently describe crucial life insurance risks associated with certain modern life insurance policies. For example, life insurance risks in connection with disability insurance based on two-parameter continuous-time rehabilitation rates or "random spouse" contracts cannot be adequately modelled by Markov processes with countable state spaces, but by those on more general state spaces. 
\par
The application of Markov processes with more general state spaces could be in particular relevant for immunocompromised policyholders, who have a high exposure to disease recidivism over time and who need to earn benefits whilst undergoing recovery, rehabilitation or medical treatment.
\par
In this paper, we use Markov jump processes on more general spaces to model insurance risks and to establish a new type of Thiele's differential equation for the computation of insurance reserves. 
\par
Our paper is organised as follows: In Subsections 1.1 and 1.2, we introduce the mathematical setting of this article and some insurance notation needed later on. In Section 2, we derive Thiele's differential equation in the framework of Markov jump processes on more general spaces. Finally, in Section 3, we discuss some insurance policies which necessitate and justify the use of Markov processes with non-countable state spaces in risk modelling. Further, an example of numerical implementation is presented.
\subsection{Mathematical preliminaries}
In this Subsection, we pass in review some mathematical notions and results which we will need throughout the paper. See e.g. \cite{ethier2009markov} or \cite{blumenthal2007markov} as for results on Markov processes. 
\begin{defn}
Given a Polish space $S$, let $D(S)$ be the space of càdlàg functions from the interval $[0,\infty)$ into $S$ (i.e. the space of functions $f:[0,\infty) \to S$, which are right continuous with existing left sided limits). We also use the symbol $\mathscr{S}$ for the Borel $\sigma$-algebra on $S$. Also, define $\mathrm{PC}([0,\infty), S)$ to be the subpace of $D(S)$ determined by the additional requirement that functions are piecewise-constant, i.e. constant on the half-open intervals between jump discontinuities.
\end{defn}
In the next definition, we introduce the \emph{regular insurance model} in our settings. See 
\cite{Koller} in the case of finite-state Markov chains.

\begin{defn}\label{defn1}
A regular insurance model consists of the following objects:
\begin{enumerate}
\item A measurable space $(S, \mathscr{S})$ called the \emph{state space}, where $S$ is Polish and $\mathscr{S}$ is the Borel $\sigma$-algebra. 
\item A filtered probability space $(\Omega, \mathscr{F}, \{\mathscr{F}_t\}_{t\geq 0}, P)$.
\item A \emph{kernel of positive measure}, i.e. a map 
\[
[0,T] \times S \times \mathscr{S} \ni (t,x,B) \mapsto q_t(x, B) \in \mathbb{R}
\]
such that for every $t \in [0,T], B\in \mathscr{S}$, $x \mapsto q_t(x, B)$ is $\mathscr{S} - \mathscr{B}(\mathbb{R})$-measurable, and for every $t\in [0,T], x\in S$, $B \mapsto q_t(x, B)$ is a positive measure.

\item \label{condprob}
A Markov jump process on $S$, i.e. $X:\Omega \rightarrow D_T(S)$ whose paths are almost surely in $\mathrm{PC}([0, \infty), S)$, such that $X$ is $\mathscr{F}_t$-adapted and has the Markov property with respect to $\mathscr{F}_t$ and $P$, and such that  $q$ is the jump intensity function, i.e. for $x\in S$, 
\begin{equation}
P_{t,t+h}(x, B) = q_t(x, B)h + o(h) 
\end{equation}
for h $\searrow $ 0, where $P_{t,s}(x, B)$ is a transition function of $X$, i.e. 
\[
\label{condprob}
P_{t, s}(x, B) \triangleq P[X_s \in B | X_t = x]. 
\]

We use the notation $X_t$ for the random variable given by $X_t(\omega) = X(\omega)(t)$ for $t\in [0, \infty), \omega \in \Omega$.
 \item
 
A measurable function $B:[0,\infty) \times S \rightarrow \mathbb{R}$ which is of bounded variation (BV) in the first variable.
\item
A BV function $b:[0,\infty) \times S \times S \rightarrow \mathbb{R}$.

\end{enumerate}
\end{defn}

\begin{rem}
By the conditional probability expression in item \ref{condprob} above, we mean the following: 
Since $X_t$ is a random map into a Polish space there exist (see \cite{karshr1}, pp. 84-85 and references cited there) \emph{regular conditional probabilities}, i.e. maps $\nu_t: S\times \mathscr{F} \rightarrow [0, 1]$ such that 
\begin{enumerate}
\item $S\ni g\mapsto \nu_t(g, \Gamma) $ is measurable for every $ \Gamma \in \mathscr{F}$.\\
\item $\mathscr{F} \ni \Gamma \mapsto \nu_t(g, \Gamma)$ is a probability measure for each $g\in S$, and\\
\item $P(A\cap X_t^{-1}(B)) = \int_{h \in B}\nu_t(h, A)P(X_t^{-1}(dh))$.
\end{enumerate}
From the third equality it follows for all measurable maps $Y:\Omega\rightarrow S$ and $f\in L^1(PY^{-1})$  that 
\[
E(f(Y)|X_t) = \int_{h \in S}f(h)\nu_t(X_t, Y^{-1}(dh)), 
\] $PX_t^{-1}$-a.s. To see this, let $\phi:S\rightarrow \mathbbm{R}$ be measurable and assume moreover that $\phi, f$ both have finite image, i.e.  

\[
\phi = \sum_{j=1}^d \alpha_j\mathbbm{1}_{A_j}, \hspace{1cm} f = \sum_{j = 1}^v \beta_j \mathbbm{1}_{B_j}\\\\
\]
then observe that
\[
\begin{split}
E(\phi(X_t)f(Y)) &= \sum_{i=1}^d \sum_{j = 1}^v\alpha_i\beta_jE(\mathbbm{1}_{A_i}(X_t)\mathbbm{1}_{B_j}(Y))\\
&=\sum_{i=1}^d \sum_{j = 1}^v\alpha_i\beta_jP(Y^{-1}(B_j)\cap X_t^{-1}(A_i))\\
&=\sum_{i=1}^d\alpha_i \int_{h\in A_i} \left(\sum_{j=1}^v \nu_t(h, Y^{-1}(B_j))\right) P(X_t^{-1}(dh))\\
&=\sum_{i=1}^d\alpha_i \int_{h\in A_i} \int_{g\in S}f(g)\nu_t(h, Y^{-1}(dg))P(X_t^{-1}(dh))\\
&=E\left(\phi(X_t)\int_{g\in S}f(g)\nu(X_t, Y^{-1}(dg))\right)
\end{split}
\] 
We then infer the general case of $L^1$ functions by approximation with simple functions and continuity of the expectation functional (or alternatively the monotone class theorem). Or see e.g. \cite{ganssler2013} regarding the above relation.

In particular, we have that
\begin{equation}
\label{condint}
E(f(X_s)|X_t = g) = \int_{h\in S}f(h)\nu_t(g, X_s^{-1}(dh)).
\end{equation}

\end{rem}
\begin{rem}
\label{remark1}
In the sequel, the symbol $B(t, g)$, which we sometimes denote by $B_g(t)$ should be interpreted as the accumulated payment stream up to time $t$ if $X_s= g$ for all $s\leq t$. Since it is assumed that payments depend only on the time and state at each moment, the accumulated payments over an interval $I$ is given by $\int_I dB_{X_t}(t)$. 

The symbol $b(t, g, h)$, which we also denote by $b_{gh}(t)$, should be interpreted as an immediately incurred payment at the time $t$ of a transition from $g$ to $h$, i.e. the sum of all payments from such events over an interval $I$ is given by 
\[
\int_I b_{gh}(t)dN_{gh}(t),
\]

where $N_{gh}(t)$ is the number of jumps performed by $X$ from $g$ to $h$ up to time $t$, in other words:
\[
N_{gh}(t) =
 \begin{cases} 
      \#\{s|s\leq t,  X_{t-} = g, X_t = h\}, g \neq h \\
0, g = h
        \end{cases}
\]
\end{rem}

\begin{rem}
The assumptions made here imply that a Markov process having a given function as its jump intensity can always be constructed, see \cite{Eb15}. The construction involves decomposing the problem in two parts: The \emph{jumping times} 
\[
J_1, J_2, \dots
\]
 and the random sequence of occupied states 
 \[
 Y_0, Y_1, \dots
 \]

such that $X_t = Y_i$ for $t\in [J_i, J_{i+1}]$.

The conditional distribution of the next jumping time is given by the \emph{survival function}:

\begin{equation}
P(J_{i+1} - J_i >t | J_1, \dots, J_i, Y_0, \dots, Y_i) = e^{-\int_0^t\lambda_{J_i+s}(Y_i)ds}, 
\end{equation}

where $\lambda$ is the \emph{total jump rate} given by 
\[
\lambda_t(x) = q_t(x, S \setminus \{x\}).
\]

It will also be the case that 
\[
P[Y_i\in B|J_1, \dots, J_i, Y_0, \dots Y_{i-1}] = \pi_{J_i}(Y_{i-1}, B),
\]

where $\pi_t$ is a family of probability measures parametrized by $S$, given by:

\[
\pi_t(x, B) = \frac{q_t(x, B)}{\lambda_t(x)}.
\]

\end{rem}

\subsection{Transition probabilities}

We recall the  \emph{Kolmogorov-Chapman} equation and the \emph{backward and forward Kolmogorov} equations from Markov process theory. See e.g. \cite{ethier2009markov} or \cite{blumenthal2007markov}.

\begin{lem}
For $t < u < s$, $g \in S$ and $\Gamma \in \mathscr{S}$,
\begin{enumerate}[i)]
\item $P_{t,t}(g, \Gamma) = \mathbbm{1}_\Gamma(g) \label{kolmtt}$.\\
\item $P_{t,s}(g,\Gamma) = \int_{S} P_{t,u}(g,dh)P_{u,s}(h,\Gamma)\label{chapman}$.\\
\item $\frac{\partial}{\partial t}P_{t, s}(g, \Gamma) = \lambda_t(g)P_{t,s}(g, \Gamma) - \int_{S \setminus\{g\}}q_t(g, dh)P_{t,s}(h,\Gamma)\label{backward}$.\\
\item $\frac{\partial}{\partial s}P_{t, s}(g, dh) = -P_{t, s}(g, dh)\lambda_s(h) + \int_{S \setminus\{h\}}P_{t, s}(g, dk)q_s(k, dh).$\label{forward}
\end{enumerate}
%
%
%
%


\end{lem}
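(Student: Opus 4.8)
The four assertions are the general--state--space counterparts of classical facts, and the plan is to establish them in the order listed, each building on the previous one. Part (i) is essentially definitional: by item 4 of Definition \ref{defn1} and the discussion of regular conditional probabilities, $P_{t,t}(g,\Gamma)=E(\mathbbm{1}_\Gamma(X_t)\mid X_t=g)$, and since $\mathbbm{1}_\Gamma(X_t)$ is $\sigma(X_t)$--measurable the conditional expectation is $\mathbbm{1}_\Gamma(X_t)$ itself, whose value at $X_t=g$ is $\mathbbm{1}_\Gamma(g)$; formally one checks this against \eqref{condint} with $f=\mathbbm{1}_\Gamma$.

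For part (ii) I would use the Markov property together with the tower property of conditional expectation. Writing $P_{t,s}(g,\Gamma)=E(\mathbbm{1}_\Gamma(X_s)\mid X_t=g)$ and inserting a conditioning on $\mathscr{F}_u$ for $t<u<s$, the Markov property reduces $E(\mathbbm{1}_\Gamma(X_s)\mid\mathscr{F}_u)$ to $E(\mathbbm{1}_\Gamma(X_s)\mid X_u)=P_{u,s}(X_u,\Gamma)$; conditioning this on $X_t=g$ and expressing the outer conditional expectation through the regular conditional probability $\nu_t$ as in \eqref{condint} yields $\int_S P_{t,u}(g,dh)P_{u,s}(h,\Gamma)$. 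A minor point to dispatch is the $\mathscr{S}$--measurability of $h\mapsto P_{u,s}(h,\Gamma)$, which follows from the measurability requirement on the transition function (or by a monotone--class argument starting from indicators).

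Parts (iii) and (iv) are the substantive ones and rest on combining (ii) with the local expansion of the transition function encoded in the jump--intensity condition $P_{t,t+\epsilon}(x,B)=q_t(x,B)\epsilon+o(\epsilon)$. For (iii), apply (ii) with $u=t+\epsilon$ and split the inner integral over the atom $\{g\}$ and its complement,
\[
P_{t,s}(g,\Gamma)=P_{t,t+\epsilon}(g,\{g\})\,P_{t+\epsilon,s}(g,\Gamma)+\int_{S\setminus\{g\}}P_{t,t+\epsilon}(g,dh)\,P_{t+\epsilon,s}(h,\Gamma),
\]
then insert $P_{t,t+\epsilon}(g,\{g\})=1-\lambda_t(g)\epsilon+o(\epsilon)$ and $P_{t,t+\epsilon}(g,dh)=q_t(g,dh)\epsilon+o(\epsilon)$ on $S\setminus\{g\}$, rearrange into the difference quotient $\bigl(P_{t,s}(g,\Gamma)-P_{t+\epsilon,s}(g,\Gamma)\bigr)/\epsilon$, and let $\epsilon\searrow 0$ to obtain the backward equation. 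Part (iv) is the mirror image: apply (ii) with $u=s-\epsilon$, expand the measure $P_{s-\epsilon,s}(k,\cdot)$ as $(1-\lambda_{s-\epsilon}(k)\epsilon)\,\delta_k(\cdot)+\epsilon\,q_{s-\epsilon}(k,\cdot)+o(\epsilon)$, substitute, and identify $\partial_s P_{t,s}(g,dh)$ from the resulting difference quotient.

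The main obstacle is not the algebra but the analytic bookkeeping in (iii)--(iv): one must control the $o(\epsilon)$ terms uniformly enough in the state variable and against the measures being integrated so that they survive integration, and one must justify interchanging the limit $\epsilon\searrow 0$ with the integral (via dominated convergence or an equicontinuity estimate) and the existence of the stated partial derivatives. This is where right--continuity in time of the transition function together with the boundedness/regularity of $q$ and $\lambda$ implicit in the notion of a regular insurance model enters. A secondary technicality is the handling of the diagonal: in a non--countable $S$ the singleton $\{g\}$ (resp. $\{h\}$) need not be charged by $q_t(g,\cdot)$ or by $P_{t,u}(g,\cdot)$, so the decomposition must be phrased so that the atom is treated separately and the jump part is integrated over $S\setminus\{g\}$ (resp. $S\setminus\{h\}$), exactly as in the statement. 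Once these measure--theoretic issues are settled, the formulas follow.
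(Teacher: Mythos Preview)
The paper does not actually prove this lemma: it is stated in Subsection~1.2 as a recalled fact from Markov process theory, with the sentence ``We recall the \emph{Kolmogorov--Chapman} equation and the \emph{backward and forward Kolmogorov} equations from Markov process theory. See e.g.\ \cite{ethier2009markov} or \cite{blumenthal2007markov},'' and no further argument is given. So there is nothing in the paper to compare your proposal against line by line.

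That said, your sketch is exactly the standard derivation one finds in the cited references and is sound in outline: (i) is definitional, (ii) follows from the Markov and tower properties, and (iii)--(iv) come from inserting the short-time expansion $P_{t,t+\epsilon}(x,B)=q_t(x,B)\epsilon+o(\epsilon)$ into the Chapman--Kolmogorov relation at $u=t+\epsilon$ (respectively $u=s-\epsilon$) and passing to the limit. Your cautions about uniform control of the $o(\epsilon)$ remainder under integration and about the treatment of the diagonal atom are the genuine technical points; the paper sidesteps these entirely by deferring to the literature, so in effect you have supplied more than the paper does.
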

\section{The prospective reserve}

\subsection{Some notions from life insurance}
In this Subsection, we recall some notions from life insurance which can be e.g. found in \cite{Koller} or \cite{Norberg} in the setting of finite-state Markov chains or processes.
\begin{defn}
A (deterministic) \emph{discount function}  $v$ is a continuous function $v:[0, \infty) \times [0, \infty) \rightarrow (0, \infty)$. Often but not always, $v$ derived from a technical interest rate $r$ which is usually positive, i.e. $v(s,t) = e^{-r(t-s)}<1$.

\end{defn}
\begin{defn} 
Given a regular insurance model and a discount function $v$, the \emph{present value of future cashflows} (liabilities) is defined as 
\begin{equation} 
V(t) = \int_{[t, \infty)} v(t,s)\left(dB_{X_s}(s) + b_{X_{s-}, X_s}(s) dN_{X_{s-}, X_s}(s)\right)
\end{equation}
or 

\begin{equation}
\int_{[t, \infty)} v(t, s)dB(s)
\end{equation}

where $B$ is the \emph{total cashflow} up to time $t$ given by

\[
dB_{X_t}(t) + b_{X_{t-}, X_t}(t) dN_{X_{t-}, X_t}(t).
\]

Since $X_t, t\ge0$ is piecewise-constant and cádlág, and the total number of jumps up to time $t$, henceforth called $ N^X(t)$, also is non-decreasing and piecewise constant, both $B_{X_\cdot}$  and $N^X$ are BV (i.e. of bounded variation) and the differentials are interpreted in the sense of Lebesgue-Stieltjes integration. 
\end{defn}
$V$ can in fact be rewritten as: 
\begin{equation}
V(t) = \int_{[t, \infty)}v(t,s)dB_{X_s}(s) + \sum_{i\geq 1} v(J_i)b_{Y_{i-1}, Y_i}(J_i).
\label{altdefv}
\end{equation}
\begin{defn}
The \emph{prospective reserve} is the map $S \times [0, \infty) \ni (g, t) \mapsto V_g(t)$ given by
\[
V_g(t) = E[V(t)|X_t = g].
\]
\end{defn}

\subsection{Thiele's equation}
Using the notation and results of the previous Sections, we are now able to derive the following new type of Thiele's differential equation for the calculation of insurance reserves.
\begin{thm}
\label{ThieleEqThm}
Assume the regular insurance model in Definition \ref{defn1} and suppose that the discount function is of the form $v(t, s) = e^{-\int_t^sr(u)du}$. Then,  
\[
dV_g(t) = (\lambda_t(g) + r(t))V_g(t)dt -dB_g(t) -  \int_S(b_{gh}(t) + V_h(t))q_t(g, dh)dt.
\]

\end{thm}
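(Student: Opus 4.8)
The plan is to mimic the classical derivation of Thiele-type equations, adapted to the general state space: compare $V_g(t)$ with $V_g(t+h)$ for small $h>0$ by splitting the present value at time $t+h$, use the Markov property to reduce the tail to $V_{X_{t+h}}(t+h)$, insert the local expansion $P_{t,t+h}(g,\cdot)=q_t(g,\cdot)h+o(h)$ of the transition function together with $v(t,t+h)=1-r(t)h+o(h)$, and let $h\downarrow 0$.

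Concretely, since $v(t,s)=v(t,t+h)\,v(t+h,s)$ for $t\le t+h\le s$, one has
\[
V(t)=\int_{[t,t+h)}v(t,s)\bigl(dB_{X_s}(s)+b_{X_{s-},X_s}(s)\,dN_{X_{s-},X_s}(s)\bigr)+v(t,t+h)\,V(t+h).
\]
Applying $E[\,\cdot\mid X_t=g]$, the Markov property in the form $E[V(t+h)\mid\mathscr{F}_{t+h}]=E[V(t+h)\mid X_{t+h}]$ (legitimate since $V(t+h)$ is a measurable functional of $\{X_s:s\ge t+h\}$) turns the last term into $v(t,t+h)\int_S V_k(t+h)\,P_{t,t+h}(g,dk)$. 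For the first term, conditioning on whether $X$ jumps on $[t,t+h)$ isolates, to first order in $h$, the no-jump contribution $\int_{[t,t+h)}dB_g(s)$ and the single-jump contribution $h\int_{S\setminus\{g\}}b_{gk}(t)\,q_t(g,dk)$, while two or more jumps contribute $o(h)$. Using $P_{t,t+h}(g,\{g\})=1-\lambda_t(g)h+o(h)$ and $P_{t,t+h}(g,dk)=q_t(g,dk)h+o(h)$ on $S\setminus\{g\}$ and collecting terms gives
\[
V_g(t+h)-V_g(t)=\bigl(r(t)+\lambda_t(g)\bigr)h\,V_g(t+h)-\int_{[t,t+h)}dB_g(s)-h\int_{S\setminus\{g\}}\bigl(b_{gk}(t)+V_k(t+h)\bigr)q_t(g,dk)+o(h).
\]
Dividing by $h$ and letting $h\downarrow 0$, with the convention $q_t(g,\{g\})=0$ so that $\int_{S\setminus\{g\}}=\int_S$, yields the asserted identity of differentials.

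The algebra above is routine; the real work is the analytic justification of the error terms. One needs the expansion $P_{t,t+h}(g,B)=q_t(g,B)h+o(h)$ to be uniform enough in $B\in\mathscr{S}$ that it may be integrated against $V_k(t+h)$; integrability of $V(t)$ and of the payment functionals in order to interchange expectation, limit and Lebesgue--Stieltjes integration; and enough regularity (right-continuity will do) of $t\mapsto V_g(t)$ and of the maps $k\mapsto V_k(t)$ to pass to the limit inside the integrals. Moreover, since $B_g$ is only assumed BV and not absolutely continuous, the term $\int_{[t,t+h)}dB_g(s)$ — and hence the conclusion — must be read as an identity of signed measures/differentials, not of pointwise derivatives. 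A tidy way to organize all this is to first iterate the one-step relation above into the explicit integral representation of $V_g(t)=E[V(t)\mid X_t=g]$, with every conditional expectation written out via the regular conditional probabilities $\nu_t$ and the transition functions $P_{t,s}$, and only then read off the differential statement.
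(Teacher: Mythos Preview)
Your argument is correct in outline, but it follows a genuinely different route from the paper. You use the classical \emph{incremental} derivation: compare $V_g(t)$ with $V_g(t+h)$ via the decomposition $V(t)=\int_{[t,t+h)}+v(t,t+h)V(t+h)$, insert the local expansion $P_{t,t+h}(g,\cdot)=q_t(g,\cdot)h+o(h)$ and $v(t,t+h)=1-r(t)h+o(h)$, and pass to the limit. The paper instead conditions on the \emph{first jump time} $J_t$ after $t$: using the survival density $P(J_t\in ds\mid X_t=g)=\lambda_s(g)e^{-\int_t^s\lambda_u(g)\,du}ds$, it obtains the closed-form representation
\[
V_g(t)=\int_{[t,\infty)}e^{-\int_t^u(\lambda_y(g)+r(y))\,dy}\,dB_g(u)
+\int_{[t,\infty)}e^{-\int_t^s(\lambda_u(g)+r(u))\,du}\!\int_{S\setminus\{g\}}\!\!\bigl(b_{gh}(s)+V_h(s)\bigr)\,q_s(g,dh)\,ds,
\]
and then simply differentiates each term in $t$ by Leibniz' rule. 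The advantage of the paper's approach is precisely that it bypasses the analytic issues you (rightly) flag: once the integral formula is in hand, no uniform-in-$B$ control of the $o(h)$ in $P_{t,t+h}(g,B)$ is needed, nor any limit interchange involving $k\mapsto V_k(t+h)$. Your approach, by contrast, is closer to the finite-state textbook derivation and makes the probabilistic mechanism (discounting, survival, jump compensation) transparent term by term; it is also the natural starting point if one wants a martingale formulation. Interestingly, the ``tidy way'' you suggest at the end---passing to an explicit integral representation first---is essentially what the paper does, except that the representation is obtained in one stroke via first-jump conditioning rather than by iterating the one-step relation.
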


\begin{proof}
Let $J_t = J_i$ where $i = \min\{k\in \mathbbm{N}|J_k \geq t\}$.
Since 
\begin{equation}
P[J_t\geq u | X_t = g] = P[X_{t+v} = g \hspace{1mm}  \forall v\in (0, u)|X_t = g] = e^{-\int_{[t, u)}\lambda_v(g)dv}
\end{equation}

the real-valued random variable $J_t$ has a conditional density given by 
\begin{equation*}
P(J_t\in ds | X_t = g) = \lambda_s(g)e^{-\int_{[t, s)}\lambda_u(g)du}ds.
\end{equation*}

So 

\begin{eqnarray*}
V_g(t) = E\left[V(t) \middle| X_t = g\right] = \int_{[t, \infty)} P(J_t\in ds | X_t = g)E[V(t)|J_t = s, X_t = g]\\
= \int_{[t, \infty)}\lambda_s(g)e^{-\int_{[t, s)}\lambda_u(g)du}\left\{\int_{[0,s]}v(t, u)dB_g(u) + E[(b_{gY_1}+V_{Y_1}(s))|J_1 = s]\right\}ds \\
=V^{(1)}_g(t) + V^{(2)}_g(t),
\end{eqnarray*}
where
\begin{eqnarray*}
V^{(1)}_g(t) = \int_{[t,\infty)}\lambda_s(g)e^{-\int_{[t, s)}\lambda_u(g)du}\int_{[t,s]}v(t,u)dB_g(u)ds.\\
\end{eqnarray*}

By Fubini's theorem, this is equal to 

\begin{eqnarray*}
\int_{[t,\infty)}v(t,u)\int_{[u,\infty)}\lambda_s(g)e^{-\int_{[t, s)}\lambda_y(g)dy}dsdB_g(u) = \int_{[t, \infty)}v(t,u)e^{-\int_{[t, u)}\lambda_y(g)dy}dB_g(u) \\
=  \int_{[t, \infty)}e^{-\int_{[t, u)}\lambda_y(g)+r(u)dy}dB_g(u).\end{eqnarray*}

This means that  

\begin{eqnarray*}
V^{(2)}_g(t) = \int_{[t,\infty)}v(t, s)\lambda_s(g)e^{-\int_{[t, s)}\lambda_u(g)du} \int_{S\setminus \{g\}} (b_{gh}(s)+ V_h(s)) \pi_s(g, dh)ds\\
= \int_{[t,\infty)}v(t, s)e^{-\int_{[t, s)}\lambda_u(g)du} \int_{S \setminus \{g\}} (b_{gh}(s)+ V_h(s)) q_s(g, dh)ds.
\end{eqnarray*}

Computing the derivatives of $V^{(1)}_g(t)$, resp. $V^{(2)}_g(t)$ with respect to $t$, we get 

\begin{eqnarray*}
dV_g^{(1)}(t) = -dB_g(t) + (\lambda_t(g) + r(t))V^{(1)}_g(t) 
\end{eqnarray*}

and

\begin{eqnarray*}
\frac{d}{dt}V^{(2)}_g(t) = -\int_{S \setminus \{g\}} (b_{gh}(t) + V_h(t)q_t(g, dh) + (\lambda_t(g)+r(t))V^{(2)}_g(t),
\end{eqnarray*}

so

\begin{eqnarray*}
dV_g(t) = (\lambda_t(g) + r(t))V_g(t)dt - dB_g(t) - \int_{S \setminus \{g\}}(b_{gh}(t) + V_h(t)q_t(g, dh).
\end{eqnarray*}
\end{proof}
Finally, we aim at discussing some examples from life insurance which show the need of risk modelling by using Markov processes on more general state spaces.
\begin{example}{(\textbf{The discrete case})}

If $S$ is a countable set with the discrete topology and we assume that $q_t(x, \cdot)$ is a Borel measure on $S$ for every $t\geq 0$ and $x \in S$, the model is reduced to the one described in \cite{Koller}. Values of the $q$-measure on singletons are identical to the jump intensities $(\mu_{ij}(t) = q_t(i, \{j\})$, and the Thiele equation is reduced to the familiar form: 

\[
dV_g(t) = (\lambda_t(g) + r(t))V_g(t)dt -dB_g(t) -  \sum_{h\in S, h \neq g}(b_{gh}(t) + V_h(t))\mu_{gh}(t)dt.
\] 
\end{example}
\begin{example}{(\textbf{Disability insurance with rehabilitation})}

\label{rehabex}
Assume a state space for the insured consisting of three states,
\[
S= \{*, \diamond, \dagger\},
\]

interpreted respectively as healthy, disabled, and deceased. A Markov model with this state space is  an unsatisfactory model for disability insurance, since it implies that the jump intensity from the disabled state to the healthy state (rehabilitation) is solely a function of time. It is clear that a model with any hope of being realistic would have to take into account the dependence of the rehabilitation intensity on the time elapsed since the last transition \emph{into} the disabled state. The solution to this problem (compare to examples in \cite{Koller} in the case of finite-state Markov chains.) lies in replacing $S$ by the state space 
\[
S' \triangleq \{*, \dagger\} \cup (\{\diamond\} \times [0, \infty)),
\]
interpreted as follows: $*$ is the healthy state, $\dagger$ is death, as before. $(\diamond, t)$ means that the insured is disabled and that the jump to disability occurred at the time point $t$. 

The prescription that the rehabilitation intensity $\mu_{\diamond *}(t, \tau)$ should be a  function not only of time $t$, but also of the time $\tau$ since the last jump to $\diamond$, is realized by defining $q$ by 

\[
q_t(*, \{\diamond\} \times H) = \mu_{* \diamond}(t)\mathbbm{1}_H(t),
\]
\[
q_t((\diamond, s), \{*\}) = \mu_{\diamond *}(t, t - s)\mathbbm{1}_{[0, \infty)}(t - s)
\] 

and with the death intensity (from healthy or disabled state) defined in the usual way. 
\end{example}

\begin{example}{(\textbf{Random spouse})}

If we are trying to model an insurance contract giving an annuity payment for a spouse who is left behind when the insured dies, and which is payable continuously until the death of the spouse, the usual route is to employ a two-life model where the state space is a product space of the respective state spaces for the insured and the spouse. However in practice there exist arrangements where the insurer does not know the age of the spouse or whether there even is a spouse at the initiation of the contract, but instead learns of this upon the death of the insured. Nevertheless the insurer has to compute a reserve, which thus has to take into account the random nature of the marital status. In some model definitions, including some mandated as minimum requirements for technical provisions of Norwegian life insurers (see e.g. \cite{capital}), the probability of the existence of a spouse, and the distribution of the age of the spouse, conditional on his or her existence, are given by functions which depend on the age of the insured at the time of death. Since the subsequent evolution of the system involves the mortality of the spouse which is a function of age, and thus depends on the preceding history, Markov chains on a finite or discrete state space are not well-suited for this. The current framework gives a natural resolution to this problem via the following setup: 

\[
S = \{*\} \cup \left(\{\dagger\} \times \mathbbm{R}\right)
\]

where the continuous state variable is interpreted as the age difference between the insured and the spouse revealed at the time of death. Assume given the age-dependent probability $g(t)$  of observing a spouse at the time of death of the insured, and assume that the conditional distribution of the age difference is given by a probability measure $\phi$ on $(\mathbbm{R}, \mathscr{B}(\mathbbm{R}))$. This setup is realized by defining $q$ as follows:

\[
q_t(*, \{\dagger\} \times H)) = \mu_{*\dagger}(t)g(t)\phi(H)
\]

where $\mu_{*\dagger}$ is just the usual mortality rate.
\end{example}

\begin{section}{Computer implementation}
We will give a proof-of-concept numerical implementation method for the life insurance model, focusing on example \ref{rehabex}. 
 Assuming $B_g$ is differentiable with respect to time, we write
\[
\dot{B}_g(t)=\frac{d}{dt}B_g(t).
\]

Theorem \ref{ThieleEqThm} implies 

\[
\frac{d}{dt}V_g(t) = (\lambda_t(g) + r(t))V_g(t)-\dot{B}_g(t)-\int_S(b_{gh}(t)+V_h(t))q_t(g,dh)
\]

In our disability example, we have the below identities with $\delta_g(\cdot)$ denoting the Dirac measure at $g$;

\[
q_t(*, A) = \mu_{*\dagger}(t)\delta_\dagger(A) + \mu_{*\diamond}(t)\delta_{(\diamond, t)}(A)
\]
\[
q_t((\diamond,s),A) = \mu_{\diamond *}(t, t-s)\mathbbm{1}_{[0.\infty)]}(t-s)\delta_*(A) +  \mu_{\diamond \dagger}(t)\delta_\dagger(A).
\]

So

\[
\int_S(b_{*h}(t) + V_h(t))q_t(*, dh) = (b_{*\dagger}(t)+\stackrel{\mathclap{\normalfont\mbox{=0}}}{\overbrace{V_\dagger(t)}})\mu_{*\dagger}(t)+(b_{*,(\diamond,t)}(t)+V_{(\diamond,t)}(t))\mu_{*\diamond}(t).
\]

On the other hand for $g=(\diamond, s)$ we get that 
\[
\int_S(b_{(\diamond,s)h}(t)+V_h(t))q_t((\diamond,s),dh) = (b_{(\diamond,s)*}(t)+V_*(t))\mu_{\diamond *}(t, t-s)\mathbbm{1}_{[0,\infty)}(t-s) + (b_{(\diamond, s)\dagger}(t)+\stackrel{\mathclap{\normalfont\mbox{=0}}}{\overbrace{V_\dagger(t)}})\mu_{\diamond \dagger}(t).
\]

Moreover
\[
\lambda_t(*) \myeq q_t(*, S- \{*\}) = \mu_{*\dagger}(t)+\mu_{*\diamond}(t)
\]
and
\[
\lambda_t((\diamond, s)) \myeq q_t((\diamond,s),S- \{(\diamond,s)\}) = \mu_{\diamond *}(t, t-s)\mathbbm{1}_{[0,\infty)}(t-s)+\mu_{\diamond \dagger}(t).
\]
Hence

\begin{equation}
\label{deq1}
\frac{d}{dt}V_*(t)=(\mu_{*\dagger}(t)+\mu_{*\diamond}(t)+r(t))V_*(t)-\dot{B}_*(t)-\{(b_{*\dagger}(t)+\stackrel{\mathclap{\normalfont\mbox{=0}}}{\overbrace{V_\dagger(t)}})\mu_{*\dagger}(t)+(b_{*, (\diamond,t)}(t)+V_{(\diamond, t)}(t))\mu_{*\diamond}(t)\}, 
\end{equation}
\begin{multline}
\label{deq2}
\frac{d}{dt}V_{(\diamond, s)}(t) = (\mu_{\diamond *}(t,t-s)\mathbbm{1}_{[0,\infty)}(t-s)+\mu_{\diamond \dagger}(t)+r(t))V_{(\diamond, s)}(t),\\ 
-\dot{B}_{(\diamond, s)}(t)-(b_{(\diamond, s)*}(t)+V_*(t))\mu_{\diamond *}(t,t-s)\mathbbm{1}_{[0,\infty)}(t-s)-(b_{(\diamond, s)\dagger}(t)+\stackrel{\mathclap{\normalfont\mbox{=0}}}{\overbrace{V_\dagger(t)}})\mu_{\diamond \dagger}(t).
\end{multline}

The discretized version of (\ref{deq1}) based on an Euler approximation scheme for ordinary differential equations (see \cite{butcher2003}) is the recurrence relation given by 

\begin{multline*}
V_*(t_{n-1})=V_*(t_{n})-(t_{n}-t_{n-1})[(\mu_{*\dagger}(t_{n})+\mu_{*\diamond}(t_{n})+r(t_{n}))V_*(t_{n}) \\ 
-\dot{B}_*(t_{n})-\{(b_{*\dagger}(t_{n}))\mu_{*\dagger}(t_{n})+(b_{*,(\diamond,t_{n})}(t_{n})+V_{(\diamond,t_{n})}(t_{n}))\mu_{*\diamond}(t_{n})\}].
\end{multline*}

Similarly for (\ref{deq2}) and  $k\le n$

\begin{multline*}
V_{(\diamond, t_k)}(t_{n-1}) = V_{(\diamond, t_k)}(t_{n})-(t_{n}-t_{n-1})[(\mu_{\diamond *}(t_{n}, t_{n}-t_k)\mathbbm{1}_{[0,\infty)}(t_{n}-t_k)+\mu_{\diamond \dagger}(t_{n})+r(t_{n}))V_{(\diamond, t_k)}(t_{n})\\
-\dot{B}_{(\diamond, t_k)}(t_{n})-\{(b_{(\diamond, t_k)*}(t_{n})+V_*(t_{n}))\mu_{\diamond *}(t_{n},t_{n}-t_k)+b_{(\diamond, t_k)}(t_{n})\mu_{\diamond \dagger}(t_{n})\}.
\end{multline*}

We will implement this assuming that the disability insurance pays 1\$ per year as long as the insured is in the disabled state, but only until the age of retirement which we set to 67 years. We also assume a constant force of interest $r$. The recursion scheme is then reduced to 

\begin{multline}
V_*(t_{n-1})=V_*(t_{n})-(t_{n}-t_{n-1})[(\mu_{*\dagger}(t_{n})+\mu_{*\diamond}(t_{n})+r)V_*(t_{n}) 
-V_{(\diamond,t_{n})}(t_{n})\mu_{*\diamond}(t_{n})],
\end{multline}

\begin{multline}
V_{(\diamond, t_k)}(t_{n-1}) = V_{(\diamond, t_k)}(t_{n})-(t_{n}-t_{n-1})[(\mu_{\diamond *}(t_{n}, t_{n}-t_k)+\mu_{\diamond \dagger}(t_{n})+r)V_{(\diamond, t_k)}(t_{n})\\
-1-V_*(t_{n})\mu_{\diamond *}(t_{n},t_{n}-t_k)]
\end{multline}

with the boundary condition
\[
V_g(67)=0,  g\in S.
\]

Here we use transition rates of Gompertz-Makeham type, except for the rehabilitation rate which is somewhat more involved.
Specifically we set

\[
\mu_{*\dagger}(t)=0.0004+10^{0.060t-5.46},
\]
\[
\mu_{*\diamond}(t) = 0.0005+10^{0.038t-4.12}.
\]
Further, we model the two-parameter continuous-time rehabilitation rate as follows:
\[
\mu_{\diamond *}(t,\tau) = \mu_{\diamond *}(t)(1-\mu_{\diamond \dagger}(t_0+s))\mathbbm{1}_{[0,\infty)}(t-s),
\]

where

\[
\mu_{\diamond *}(t) = 0.773763-0.01045t
\]

and $t_0$ is a parameter to be adjusted to the specific situation. In our example $t_0$ is given by the age of the insured at the start of the contract. 
\subsection{Results}
Displayed in Figures \ref{fig1} and \ref{fig2} are the reserve plots of the case illustrated in example \ref{rehabex}. We observe the prospective reserve in the active state declines with increasing age of the insured in Figure \ref{fig1}. In Figure \ref{fig2}, we note the prospective reserve in the disabled state of the insured is significantly impacted. In other words, the prospective reserve in the disabled state increases with the age of the insured from onset of disability.
\par
Comparing our result with that of \emph{Example 2.4.2} in \cite{Koller}, it is obvious our model outperforms the classical model by incorporating the influence of rehabilitation on the disabled state of the insured for the computation of prospective reserves. This application justifies the use of Markov processes with non-countable state spaces in connection with our proposed model.
\begin{figure}[!htbp]
 \adjustimage{width=1.2\textwidth,center}{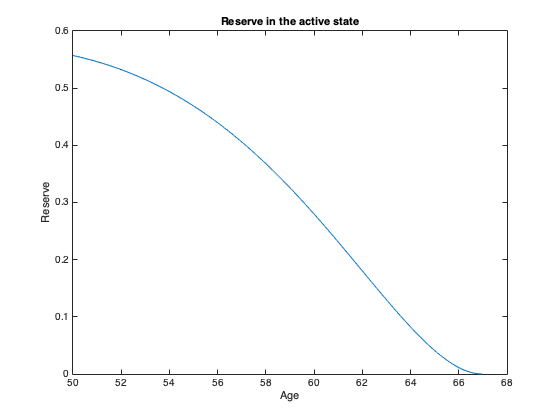}
 \caption{Plot of reserve in the active state}
 \label{fig1}
\end{figure}

\begin{figure}[!htbp]
 \adjustimage{width=1.2\textwidth,center}{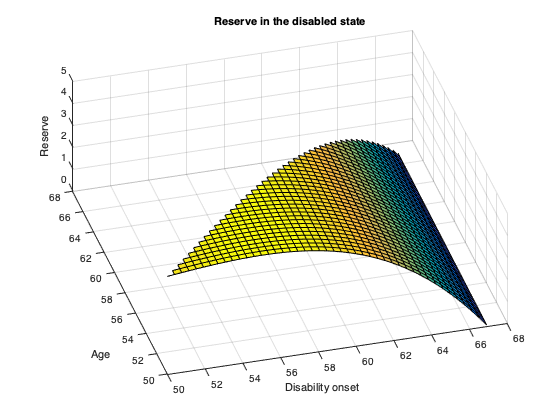}
 \caption{Plot of reserve in the disabled state}
 \label{fig2}
\end{figure}
\subsection*{Acknowledgements}
One of the authors (S.D.) wants to thank Tor Vidvei for interesting discussions on application problems which led to the idea behind this paper.

\end{section}

\bibliography{mybib}{}

\begin{thebibliography}{HNSS14}

\bibitem[Blu07]{blumenthal2007markov}
R.M Blumenthal.
\newblock {\em Markov Processes and Potential Theory}.
\newblock Dover Publications, (2007).

\bibitem[But03]{butcher2003}
J.C Butcher.
\newblock {\em Numerical Methods for Ordinary Differential Equations}.
\newblock Wiley \& Sons, (2003).

\bibitem[Ebe15]{Eb15}
Andreas Eberle.
\newblock {\em Markov Processes}.
\newblock Lecture notes, University of Bonn, (2015).

\bibitem[EK09]{ethier2009markov}
Stewart~N. Ethier and Thomas~G. Kurtz.
\newblock {\em Markov Processes: Characterization and Convergence}, volume 282.
\newblock John Wiley \& Sons, (2009).

\bibitem[Fin21]{capital}
Finanstilsynet.
\newblock {Changes in the Solvency II regulations}.
\newblock Available at \url{http://www.finanstilsynet.no}, 2020 (accessed
  14-02-21).

\bibitem[GS77]{ganssler2013}
P.~G{\"a}nssler and W.~Stute.
\newblock {\em Wahrscheinlichkeitstheorie}.
\newblock Springer, (1977).

\bibitem[HNSS14]{Henriksen}
L.F.B. Henriksen, J.W. Nielsen, M.~Steffensen, and C.~Svensson.
\newblock {\em Markov chain modeling of policyholder behavior in life insurance
  and pension. European Actuarial Journal, 4(1), pp.1-29.}
\newblock (2014).

\bibitem[Kol13]{Koller}
Michael Koller.
\newblock {\em Stochastic Models in Life Insurance}.
\newblock Springer-Verlag, (2013).

\bibitem[KS91]{karshr1}
Ioannis Karatzas and Steven~E. Shreve.
\newblock {\em Brownian Motion and Stochastic Calculus}.
\newblock Springer-Verlag, (1991).

\bibitem[MS07]{moller2007market}
Thomas M{\o}ller and Mogens Steffensen.
\newblock {\em Market-valuation Methods in Life and Pension Insurance}.
\newblock Cambridge University Press, (2007).

\bibitem[Nor95]{Norberg}
R.~Norberg.
\newblock {\em A time-continuous Markov chain interest model with applications
  to insurance. Applied Stochastic Models and Data Analysis, 11(3),
  pp.245-256}.
\newblock (1995).

\end{thebibliography}
\bibliographystyle{alpha}

\end{document}